\newcommand{\bra}[1]{{\left\langle{#1}\right\vert}}
\newcommand{\ket}[1]{{\left\vert{#1}\right\rangle}}
\newcommand{\qw}[1][-1]{\ar @{-} [0,#1]}
\newcommand{\qwx}[1][-1]{\ar @{-} [#1,0]}
\newcommand{\gate}[1]{*+<.6em>{#1} \POS ="i","i"+UR;"i"+UL **\dir{-};"i"+DL **\dir{-};"i"+DR **\dir{-};"i"+UR **\dir{-},"i" \qw}
\newcommand{\control}{*!<0em,.025em>-=-<.2em>{\bullet}}
\newcommand{\ctrl}[1]{\control \qwx[#1] \qw}
\newcommand{\targ}{*+<.02em,.02em>{\xy ="i","i"-<.39em,0em>;"i"+<.39em,0em> **\dir{-}, "i"-<0em,.39em>;"i"+<0em,.39em> **\dir{-},"i"*\xycircle<.4em>{} \endxy} \qw}
\newcommand{\Qcircuit}{\xymatrix @*=<0em>}
\newcommand{\Gate}[1]{\textsc{#1}}
\newcommand{\cnotgate}{\Gate{CNOT}} 
\newcommand{\czgate}{\Gate{CZ}}
\newcommand{\cczgate}{\Gate{CCZ}}
\newcommand{\swapgate}{\Gate{SWAP}} 
\newcommand{\hgate}{\Gate{H}}
\newcommand{\xgate}{\Gate{X}}
\newcommand{\zgate}{\Gate{Z}}
\newcommand{\FF}{\mathbb{F}}
\newtheorem{lemma}{Lemma}
\newcommand{\eq}[1]{Eq.~(\ref{eq:#1})}
\renewcommand{\sec}[1]{\hyperref[sec:#1]{Section~\ref*{sec:#1}}}
\newcommand{\ssec}[1]{\hyperref[ssec:#1]{Subsection~\ref*{ssec:#1}}}
\newcommand{\fig}[1]{\hyperref[fig:#1]{Figure~\ref*{fig:#1}}}
\newcommand{\tab}[1]{\hyperref[tab:#1]{Table~\ref*{tab:#1}}}
\newcommand{\lem}[1]{\hyperref[lem:#1]{Lemma~\ref*{lem:#1}}}
\newcommand{\propos}[1]{\hyperref[propos:#1]{Proposition~\ref*{propos:#1}}}
\newcommand{\thm}[1]{\hyperref[thm:#1]{Theorem~\ref*{thm:#1}}}
\newcommand{\alg}[1]{\hyperref[alg:#1]{Algorithm~\ref*{alg:#1}}}
\newcommand{\fortyeightqubittime}{0.00257947} % calculated using 10000 random samples
\newcommand{\ninetysixqubittime}{4.16629} % calculated using 1000 random samples
\newcommand{\ra}{\rangle}
\newcommand{\la}{\langle}
\newcommand{\ba}{\begin{array}}
\newcommand{\ea}{\end{array}}
\newcommand{\be}{\begin{equation}}
\newcommand{\ee}{\end{equation}}
\pgfplotsset{compat=1.18}
\begin{document}

\title{Fast classical simulation of Harvard/QuEra IQP circuits}

\author{Dmitri Maslov}
\affiliation{IBM Quantum, IBM T. J. Watson Research Center, Yorktown Heights, NY 10598, USA}
\author{Sergey Bravyi}
\affiliation{IBM Quantum, IBM T. J. Watson Research Center, Yorktown Heights, NY 10598, USA}
\author{Felix Tripier}
\affiliation{IonQ, 4505 Campus Drive, College Park, MD 20740, USA}
\author{Andrii Maksymov}
\affiliation{IonQ, 4505 Campus Drive, College Park, MD 20740, USA}
\author{Joe Latone}
\affiliation{IonQ, 4505 Campus Drive, College Park, MD 20740, USA}

\begin{abstract}
Establishing an advantage for (white-box) computations by a quantum computer against its classical counterpart is currently a key goal for the quantum computation community.  A quantum advantage is achieved once a certain computational capability of a quantum computer is so complex that it can no longer be reproduced by classical means, and as such, the quantum advantage can be seen as a continued negotiation between classical simulations and quantum computational experiments.

A recent publication (Bluvstein et al., Nature 626:58-65, 2024) introduces a type of Instantaneous Quantum Polynomial-Time (IQP) computation complemented by a $48$-qubit (logical) experimental demonstration using quantum hardware.  The authors state that the ``simulation of such logical circuits is challenging'' and project the simulation time to grow rapidly with the number of $\cnotgate$ layers added, see Figure 5d/bottom therein.  However, we report a classical simulation algorithm that takes only $0.00257947$ seconds to compute an amplitude for the $48$-qubit computation, which is roughly $10^3$ times faster than that reported by the original authors.  Our algorithm is furthermore not subject to a significant decline in performance due to the additional $\cnotgate$ layers.  We simulated these types of IQP computations for up to $96$ qubits, taking an average of $4.16629$ seconds to compute a single amplitude, and estimated that a $192$-qubit simulation should be tractable for computations relying on Tensor Processing Units. 
\end{abstract}

\maketitle

\section{Introduction}\label{sec:Intro}
Quantum computational technology is fast approaching the regime where quantum computations become difficult to simulate by classical means.  A few noteworthy examples of white-box quantum computations designed primarily to demonstrate an advantage over classical computations include Boson Sampling \cite{aaronson2011computational}, quantum supremacy \cite{preskill2012quantum, aaronson2016complexity, bouland2019complexity, aharonov2023polynomial}, and Instantaneous Quantum Polynomial-Time (IQP) circuits \cite{shepherd2009temporally}.  Benefits of the latter include conceptual simplicity, moderate requirements on hardware (e.g., available gate library), and suitability for demonstration using early fault-tolerant quantum computers since all required logical gates can be implemented transversally for a suitable error correction code \cite{paletta2023robust}.  Indeed, making supremacy circuits fault-tolerant is likely to require magic state distillation which is not practical for the time being.  This, however, also highlights a potential downside of the IQP demonstrations---the relative simplicity of the IQP circuits, including the layered structure of the IQP transformation being well-defined compared to essentially random supremacy computations enabling classical simulation of 50-qubit IQP circuits in a few minutes on a laptop computer \cite{codsi2022classically}.  As a result, one may expect to require a higher number of qubits to participate in an IQP computation before it is no longer possible to simulate classically compared to supremacy.  Indeed, \cite{dalzell2020many} estimates the number of qubits necessary to achieve an advantage by IQP computations at $208$, in contrast to what will likely be about $60$ (otherwise, a number $q$ large enough that a set of $2^q$ different amplitudes may not all be simultaneously stored in memory) for a sufficiently deep supremacy computation, for which a state vector style simulation \cite{pednault2019leveraging} will likely (no free lunch) be among the best possible.

A certain type of IQP circuit has been considered recently \cite{bluvstein2023logical} as a basis for demonstrating a classically difficult quantum computation.  The authors claim that ``demonstrated logical circuits are approaching the edge of exact simulation methods'' \cite{bluvstein2023logical}.  However, we believe the quantum computation considered, including its extensions to a larger circuit depth or a higher number of qubits, can be simulated quickly using classical hardware.  

The focus of our work is on the development of a classical simulation of IQP circuits considered in \cite{bluvstein2023logical} and the demonstration of the ability to simulate such circuits for a reasonably large number of qubits $n$, substantively exceeding that required for a supremacy-style simulation with comparable classical simulation effort.  Specifically, we develop an algorithm that consists of $O\left(2^{n/3}\right)$ simulations of a $\frac{2n}{3}$-qubit  Clifford unitary, each taking time $O(n^3)$, to perform a strong simulation of the Harvard/QuEra circuit $HQ$ \cite{bluvstein2023logical}.  Our simulation computes the entire amplitude of $\bra{y}HQ\ket{0}$ in $\fortyeightqubittime$ seconds, in contrast to a weaker notion of the simulation that would replicate sampling from the probability distribution offered by the quantum computation $HQ$.

\subsection{IQP circuits}

An $n$-qubit IQP circuit can be defined as the three-stage computation $C\,{=}\,HDH$, where the first and last layers $H$ are $n$-fold products of Hadamard gates, and the middle layer $D$ performs a diagonal transformation $\ket{x_1,x_2,...,x_n} \mapsto (-1)^{f(x_1,x_2,...,x_n)} \ket{x_1,q_x,...,x_n}$ for some Boolean function $f$ over $n$ inputs $x_1,x_2,...,x_n$.  The task in the IQP computation $C$ is to observe a $\ket{y}$ given input $\ket{0}$, $\bra{y}C\ket{0}$.  Indeed, it suffices to set the input state to $\ket{0}$ since performing the IQP computation over an input state $\ket{x}$ results in $\bra{y}C\ket{x} = \bra{x \oplus y}C\ket{0}$ and is thus reducible to a different measurement outcome $x \oplus y$ (the modulo-2 sum is taken component-wise). 

While $f$ in the above definition of an IQP computation can be an arbitrary function, it suffices to consider $f$ computed by the EXOR polynomials of degree 3, which offers a problem whose complexity was shown to be \#P-hard \cite{ehrenfeucht1990computational, dalzell2020many}.  Ref. \cite{dalzell2020many} further showed the evidence that the simulation complexity by a non-deterministic classical algorithm must scale at least as $2^{n/2}$.  The algorithm reported in this paper beats the above lower bound; this, however, comes at no contradiction since the $HQ$ circuit is well-structured and does not correspond to the hardest instance of an IQP computation with the diagonal term $D$ computed by a degree-3 polynomial.

\subsection{Structure of Harvard/QuEra circuit}

The IQP circuit $HQ_k$ \cite{bluvstein2023logical} is built over $n=3m=3\,{\cdot}\,2^k$ inputs ($k{=}4$ in the largest experiment demonstrated).  Between the two layers of Hadamard gates, it implements a combination of linear Boolean transformation and phase computation.  The qubits are grouped into blocks of three, as dictated by the [[8,3,2]] error-correcting code employed to reduce computational errors.  The $2^k$ blocks are thought of as vertices of a $k$-dimensional Boolean cube to which the $\cnotgate$ gates apply transversely (i.e., in the sets of three).  The $2^k$ blocks are broken into two non-overlapping sets such that the graph distance over the Boolean cube in each set is $2$ (cube nodes are graph vertices, and cube edges are graph edges).  These two sets act as controls and targets for the transversal $\cnotgate$ gates.  At each $\cnotgate$ gate layer, the $\cnotgate$ gates apply along a yet-unexplored geometric dimension of the Boolean cube, until all dimensions are exhausted by step $k$.  This offers $k{+}1$ opportunities to apply diagonal $Z$-axis gates $\zgate$, $\czgate$, and $\cczgate$ inside each block of qubits between the $\cnotgate$ gate layers.  See \fig{hqcirc} for a schematic illustration of the $HQ_4$ circuit considered in \cite{bluvstein2023logical} as well as Figure 5C therein. We highlight that this circuit introduces an asymmetry in the application of the third diagonal $Z$-axis layer.

\begin{figure}[t]
    \centering
    \includegraphics[width=0.8\textwidth]{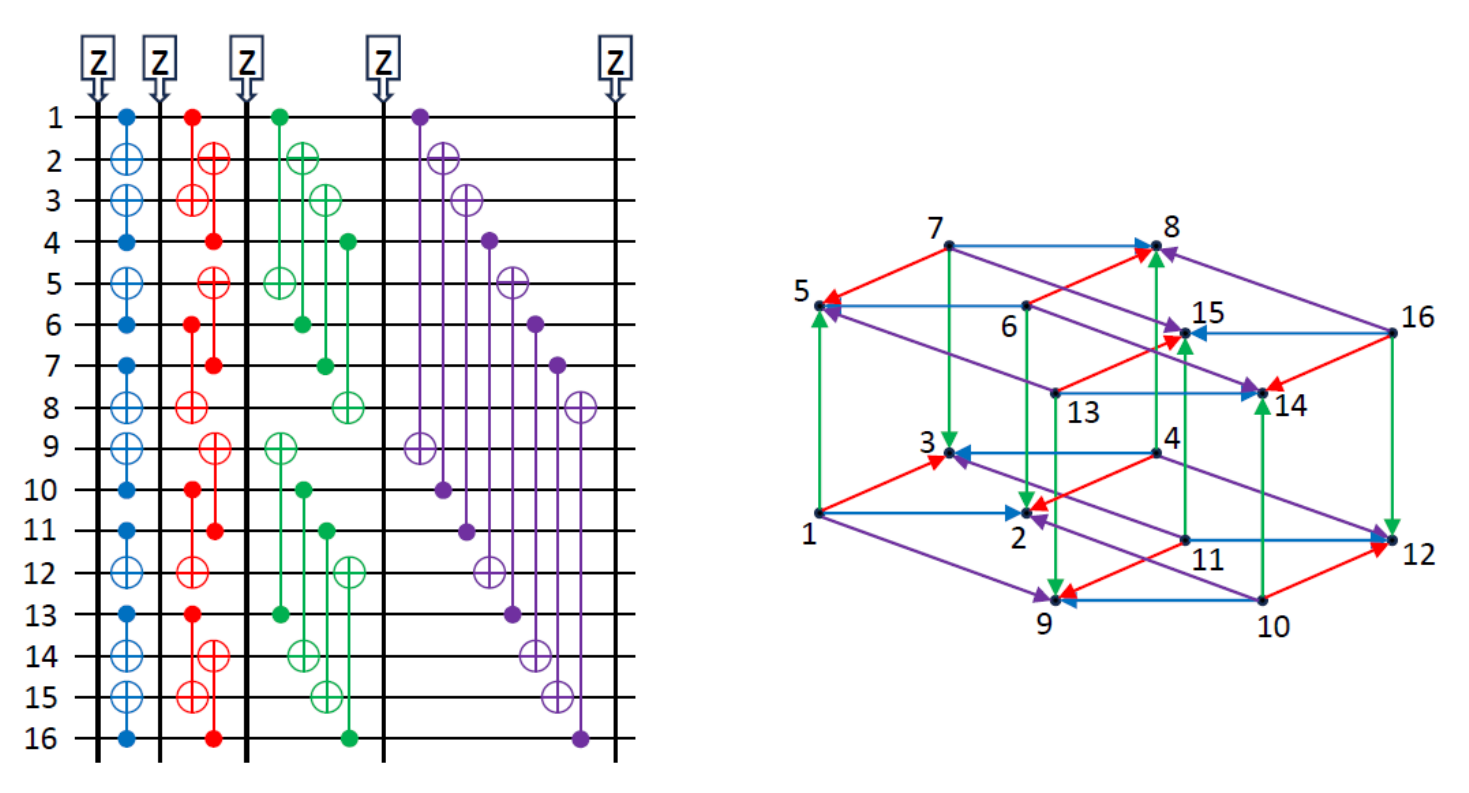}
    \caption{Illustration of the $HQ_4$ circuit.  Each block of three qubits is numbered $1..16$. The $\cnotgate$ gate colors match the colors of edges in the Boolean 4-dimensional cube illustrated on the right. The arrows point in the direction of targets of the respective $\cnotgate$ gates. The diagonal operators are applied in stages marked ``Z''. Not shown are two layers of Hadamard gates---one at the beginning and the other at the end of the computation.}
    \label{fig:hqcirc}
\end{figure}

\subsection{Types of simulation}
In this work, we distinguish two types of simulation of quantum computations.  First, noiseless `weak' or noiseless `direct' is a classical simulation that allows sampling bitstrings computed by a given quantum circuit.  Such a simulation is directly comparable to a {\em noiseless} quantum computer.  Since no current prototype quantum computer is noiseless, the above-defined `weak'/`direct' simulation is at least as powerful as the quantum computer itself.

Here, we focus on the notion of `strong' simulation, defined as the computation of the entire amplitude for a given desired observable $y$, $\bra{y}HQ_k\ket{0}$. 
This is enough to 
accomplish weak simulation, that is, produce a
sample from the output distribution of the circuit by employing known reductions from weak to strong simulation~\cite{bravyi2022simulate}.
Specifically, let $HQ_k(t)$ be the subcircuit of $HQ_k$ spanned by the first $t$ gates for some fixed ordering of gates. The gate-by-gate simulator of Ref.~\cite{bravyi2022simulate} 
takes as input a sample $y$ from the distribution $|\la y|HQ_k(t{-}1)|0^n\ra|^2$
and outputs a sample $y'$ from the distribution $|\la y'|HQ_k(t)|0^n\ra|^2$
by computing at most one amplitude of $HQ_k(t)$ and performing a simple classical processing. The amplitude computation is only needed when the $t$-th gate is Hadamard. 
Instead of starting the simulation from the first gate of $HQ_k$, we can fast forward to the final layer of Hadamards. Indeed, the output distribution of circuit $HQ_k$ terminated immediately before the final layer of Hadamards is uniform and can be sampled directly. Then the gate-by-gate simulator of~\cite{bravyi2022simulate}  would have to be called only to simulate $n$ Hadamard gates in the final layer of Hadamard. This requires $n$ amplitude computations~\footnote{We expect that the runtime is dominated by computing $O(1)$ amplitudes corresponding to subcircuits in which the final Hadamard layer is nearly full.
Indeed, removing a Hadamard gate is equivalent to setting the corresponding qubit to $0$ or $1$ throughout the circuit. Such a qubit does not support any superpositions and can be easily removed from the simulation by properly modifying the rest of the circuit.}.

\section{Our simulation}\label{sec:sim}

First, we express the $HQ_k$ circuit as a computation $HD_k^\prime H$.  The first layer of Hadamard gates performs the transformation
\begin{equation*}
\ket{0^n} \mapsto \frac{1}{\sqrt{2}^n}\sum_{x\in\{0,1\}^n} \ket{x}.  
\end{equation*}
Next comes the application of $D_k^\prime$, a diagonal operation together with a certain linear reversible function.  We write $D_k^\prime=$-$D_k$-CNOT- as a two-stage computation, where $D_k$ is the diagonal part and -CNOT- is a stage with $\cnotgate$ gates.  The -CNOT- part can be moved through the second layer of Hadamards by swapping targets and controls using the well-known circuit identity
\begin{equation*}
\Qcircuit @C=0.2em @R=0.2em @!{
& \ctrl{1}  & \gate{\hgate} & \qw \\ 
& \targ     & \gate{\hgate} & \qw }
\hspace{1em}\raisebox{-0.9em}{=}\hspace{1em}
\Qcircuit @C=0.2em @R=0.2em @!{
& \gate{\hgate}   & \targ       & \qw \\ 
& \gate{\hgate}   & \ctrl{-1}   & \qw }
\hspace{0.3em}\raisebox{-0.9em}{.}
\end{equation*}
Thus the -CNOT- stage (with inverted control/targets) can be applied directly to $\ket{y}$. Since this stage has $3k\cdot 2^{k-1}$ gates, the entire cost of simulating the linear reversible part is $O(n \log(n))$, and it is negligible. 

The leftover part $D_k$ contains $\zgate: \ket{x} \mapsto (-1)^x\ket{x}$, $\czgate: \ket{x,y} \mapsto (-1)^{xy}\ket{x,y}$, and $\cczgate: \ket{x,y,z} \mapsto (-1)^{xyz}\ket{x,y,z}$ gates, where $x$, $y$, and $z$ are either qubits transformed by the first layer of Hadamards or their linear sums, and thus compute a polynomial of degree at most $3$ into the phase.  We note that the Pauli-$Z$ gates can be flushed to the right side of the circuit using the following circuit diagram rules
\begin{equation*}
\Qcircuit @C=0.2em @R=0.2em @!{
& \gate{\zgate} & \ctrl{0}  & \qw  }
\hspace{0.3em}\raisebox{-0.0em}{$\mapsto$}\hspace{0.3em}
\Qcircuit @C=0.2em @R=0.2em @!{
& \ctrl{0}  & \gate{\zgate} & \qw }
\hspace{0.3em}\raisebox{-0.0em}{,}\hspace{0.7em}
\Qcircuit @C=0.2em @R=0.2em @!{
& \qw           & \ctrl{1}  & \qw \\ 
& \gate{\zgate} & \targ     & \qw }
\hspace{0.3em}\raisebox{-0.9em}{$\mapsto$}\hspace{0.3em}
\Qcircuit @C=0.2em @R=0.2em @!{
& \ctrl{1}  & \gate{\zgate} & \qw \\ 
& \targ     & \gate{\zgate} & \qw }
\hspace{0.3em}\raisebox{-0.0em}{, and}\hspace{0.7em}
\Qcircuit @C=0.2em @R=0.2em @!{
& \gate{\zgate} & \gate{\hgate}  & \qw }
\hspace{0.3em}\raisebox{-0.0em}{$\mapsto$}\hspace{0.3em}
\Qcircuit @C=0.2em @R=0.2em @!{
& \gate{\hgate}  & \gate{\xgate} & \qw }
\hspace{0.3em}\raisebox{-0.0em}{.}
\end{equation*}
This means that all $\zgate$ gates can be turned into bit flips that are easy to account for classically. This operation takes time $O(n \log(n))$ and thus its complexity is negligible.  This allows us to write the state vector evolution under the transformation $D_k$ as 
\begin{equation}\label{eq:2}
\frac{1}{\sqrt{2}^n}\sum_{x\in\{0,1\}^n} \ket{x} \mapsto \ket{\psi} = \frac{1}{\sqrt{2}^n}\sum_{x\in\{0,1\}^n} (-1)^{f(x)}\ket{x},
\end{equation}
where $f(x)$ is a degree-3 polynomial. Our goal is to compute an amplitude $\bra{y}HD_k^\prime H\ket{0} = \bra{y}H\ket{\psi}$ for a given $n$-tuple $y$.

To simulate the IQP circuit $HQ_k$ we first explore and develop the idea of covering sets mentioned in \cite{bravyi2019simulation}.  Specifically, let us say that a subset of qubits $S$ with $s$ qubits is a covering set if each degree-3 term in the polynomial $f(x)$ contains at least one variable from $S$.  Note that fixing all qubits in $S$ transforms a degree-3 polynomial to a degree-2 polynomial.  The degree-2 polynomial describes a $\czgate$ transformation (together with a layer of $\zgate$ gates, that can be efficiently accounted for as described in the previous paragraph).  Thus, each of the $2^s$ subspaces will be a Clifford circuit. Formally, the desired amplitude $\bra{y}HQ_k\ket{0}$ is obtained as the sum 
\begin{equation*} %\label{eq:1}
\bra{y}HQ_k\ket{0} = a_{0}(\ket{{+}{+}...{+}{+}{+}}) + a_{1}(\ket{{+}{+}...{+}{+}{-}}) + a_{2}(\ket{{+}{+}...{+}{-}{+}}) + ... + a_{2^{s}{-}1}(\ket{{-}{-}...{-}{-}{-}}),
\end{equation*}
where the sum goes across fixed variable assignments from the set $S$, and amplitudes $a_i$ are offered by the respective Clifford circuits of the form -H-CZ-Z-H- (-H-CZ-H- after Z gates are moved through Hadamards and turned into bit flips).  The circuit with the fixed set of qubits is indeed Clifford since the first layer of Hadamards transforms $\ket{+}/\ket{-}$ into $\ket{0}/\ket{1}$ correspondingly and each $\cczgate$ gate, by the construction of the set $S$, takes one of three inputs as a Boolean $0$ or $1$. The replacements $\cczgate(0,x,y) = Id$ and $\cczgate(1,x,y) = \czgate(x,y)$ remove all non-Clifford gates from the $HQ_k$ circuit.  Thus, amplitude $a_i$ can be found by simulating a Clifford circuit spanning $n{-}s$ qubits.  Our goal is to find the covering set $S$ with as few qubits as possible.

\begin{lemma}\label{lem:1}
For the $HQ_k$ circuit, the minimal covering set contains exactly $n/3$ qubits.
\end{lemma}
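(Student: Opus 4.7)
The plan is to prove this in two halves: an explicit upper bound via a natural covering set, then a matching lower bound via $n/3$ pairwise variable-disjoint degree-$3$ monomials in $f$.

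First I would color each qubit $q$ by its position $c(q) \in \{0,1,2\}$ within its $3$-qubit block. The key structural observation is that in $HQ_k$ every $\cnotgate$ is applied transversally---only between qubits of the same color---so after any prefix of the circuit each wire at position $(b,c)$ carries an $\FF_2$-linear combination of the original (post-first-Hadamard) variables of color $c$ only. Each $\cczgate$ applied to a block then contributes a phase $L_0 L_1 L_2$, a product of three linear forms (one per color), whose expansion produces only degree-$3$ monomials containing exactly one variable of each color. This immediately yields the upper bound: the set $\calS_0$ of all color-$0$ qubits, of size $m = n/3$, is a covering set, so $|\calS_{\min}| \le n/3$.

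For the matching lower bound I would exhibit $m$ pairwise variable-disjoint degree-$3$ monomials in $f$; since any covering set must pick at least one qubit from each, it would then have size $\ge m = n/3$. The natural candidates are the ``raw block monomials'' $x_{(b,0)} x_{(b,1)} x_{(b,2)}$ for $b = 1, \ldots, m$, which are disjoint by construction. To show each appears with nonzero coefficient in $f$, I would sum the contributions of all $\cczgate$ gates: the $\cczgate$ acting on block $b$ at stage $s$ contributes precisely when each of its three input linear forms $L^{(s)}_{(b,c)}$ contains the local variable $x_{(b,c)}$, a condition determined by whether the CNOT prefix before stage $s$ has mixed any outside variables into the color-$c$ wire of block $b$. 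Summing these indicators over all $k{+}1$ stages modulo $2$ yields the coefficient of the block monomial.

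The main obstacle is precisely this non-cancellation check: with $k{+}1$ $\cczgate$ layers stacked between transversal $\cnotgate$ layers, it is \emph{a priori} conceivable that several stages contribute to the same block monomial and cancel pairwise modulo $2$. I would attack this by exploiting the explicit $k$-dimensional Boolean-cube structure described earlier---each cube direction is used exactly once, and the two halves of the cube alternate as controls and targets---so the incidence pattern for each block is governed by a small number of bits and an $O(k)$ parity computation settles the coefficient; the noted asymmetry in the third diagonal $Z$-axis layer should furnish the distinguished term that prevents total cancellation. If some block monomial nevertheless cancels, a fallback is to expand $L^{(s)}_{(b,0)} L^{(s)}_{(b,1)} L^{(s)}_{(b,2)}$ at a late stage where the three linear forms have large, easily-tracked supports, and then apply Hall's theorem to select one monomial per block so that the $m$ chosen monomials are jointly variable-disjoint.
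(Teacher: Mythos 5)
Your upper bound is correct and is essentially the paper's: by $\cnotgate$ transversality every wire carries a linear form in variables of its own color, so every cubic monomial of $f$ contains exactly one variable of each color, and the $n/3$ qubits of any one color form a covering set.

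The lower bound, however, has a genuine gap, and it sits exactly where you suspected. Your primary plan is to show that every raw block monomial $x_{(b,0)}x_{(b,1)}x_{(b,2)}$ survives in $f$; the paper proves this is true only when $k$ is even. The key structural fact (which your proposal is missing) is that a cubic monomial first created by the $\cczgate$ layer at level $j$ is regenerated \emph{exactly once} at every subsequent level $j{+}1,\dots,k$, because each new $\cnotgate$ layer only adds variables to the linear forms being multiplied; hence that monomial appears in the reduced polynomial iff $k-j$ is even. The block monomials are born at level $0$, so for odd $k$ (e.g.\ the $96$-qubit case $k=5$) they cancel identically, and no appeal to the asymmetric third diagonal layer rescues them. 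Your Hall's-theorem fallback does not close this gap either: it only arranges pairwise disjointness of monomials selected from a single stage's expansion, whereas the actual danger is cancellation of the \emph{same} monomial across different $\cczgate$ stages, which disjointness does nothing to prevent. The paper's resolution for odd $k$ is to pair blocks into $n/6$ disjoint $6$-tuples, compute the surviving $6$-variable polynomial explicitly ($1{\cdot}2{\cdot}6 \oplus 1{\cdot}5{\cdot}3 \oplus 1{\cdot}5{\cdot}6 \oplus 4{\cdot}2{\cdot}3 \oplus 4{\cdot}2{\cdot}6 \oplus 4{\cdot}5{\cdot}3$), and check by inspection that it needs two covering variables, giving $s \ge 2\cdot n/6 = n/3$. (That polynomial does happen to contain two disjoint monomials, $1{\cdot}2{\cdot}6$ and $3{\cdot}4{\cdot}5$, so your disjoint-monomial framework could in principle be salvaged for odd $k$ --- but only after carrying out the level-parity bookkeeping that your proposal leaves undone.)
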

\begin{proof}
First, we establish the upper bound.  Name the qubits $1..3{\cdot}2^k$ top to bottom. The set $S:=\{3i{+}1|i=0..2^k{-}1\}$ is a covering set because, due to the $\cnotgate$ gate transversality, each $\cczgate$ takes as the first/top input EXOR of a subset of such variables.  Therefore, when written as a proper polynomial, each degree-3 term will contain a variable from the set $S$.  This set has $n/3$ qubits.

The lower bound on the number of variables in the covering set is $n/3$. To show it, we first observe that each 3-term product created at level $j{=}0..k$ (between the layers of $\cnotgate$ gates illustrated in \fig{hqcirc}, numbered $0$ to $k$) is generated again exactly once at all subsequent layers. This is because the next layer of $\cnotgate$ gates adds new variables to the linear sums that get multiplied, which adds new terms to the polynomial without removing any of the existing ones. It means that any product of primary variables that is first created at level $j$ will be present in the final reduced polynomial so long as $k{-}j$ is even, and not present in it when $k{-}j$ is odd.

If $k$ is even, the final phase polynomial will contain Boolean product terms $(3i{+}1){\cdot}(3i{+}2){\cdot}(3i{+}3)$ for all $i=0..2^k{-}1$ first generated at level $0$. This means that the set $S$ must contain at least one variable from each such term, of which there are $2^k$.  In other words, $s \geq n/3$. 

If $k$ is odd, each 6-tuple of qubits will generate polynomial $1{\cdot}2{\cdot}6 \oplus 1{\cdot}5{\cdot}3 \oplus 1{\cdot}5{\cdot}6 \oplus 4{\cdot}2{\cdot}3 \oplus 4{\cdot}2{\cdot}6 \oplus 4{\cdot}5{\cdot}3$ (shown is the polynomial expression for top $6$ qubits, it is similar of all other $6$-tuples), requiring a supporting set of size $2$, which can be established by observation. Since there are $n/6$ such non-overlapping sets, $s \geq 2\cdot n/6 = n/3$.
\end{proof}

\lem{1} allows to simulate $HQ_k$ by relying on $2^{n/3}$ simulations of a Clifford $2m$-qubit circuit of the form -H-CZ-H- (recall that $n\,{=}\,3m$).

\begin{lemma}\label{lem:cliff}
A $2m$-qubit Clifford circuit of the form -$\hgate^{2m}$-$\czgate$-$\hgate^{2m}$- resulting from slicing the $HQ_k$ circuit can be strongly simulated in time $O(m^3)$. 
\end{lemma}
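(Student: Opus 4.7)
My plan is to recognize the amplitude as a quadratic exponential (Gauss) sum over $\FF_2$ and evaluate it in $O(m^3)$ time by iterative pair elimination. Applying the first Hadamard layer to $\ket{0^{2m}}$ yields the uniform superposition; the middle layer of $\czgate$ and residual $\zgate$ gates imparts a phase $(-1)^{q(x)}$, where $q(x) = x^{\top} A x + b^{\top} x + c$ is a degree-$\leq 2$ polynomial over $\FF_2$ with symmetric $A \in \FF_2^{2m \times 2m}$ (zero diagonal), $b \in \FF_2^{2m}$, and $c \in \FF_2$, all readable from the gate list in $O(m^2)$ time. The final Hadamard layer followed by the inner product with $\ket{y}$ then expresses the target amplitude as
\begin{equation*}
\bra{y}\, H^{\otimes 2m}\, U_{CZ,Z}\, H^{\otimes 2m} \ket{0^{2m}} \;=\; \frac{1}{2^{2m}} \sum_{x \in \FF_2^{2m}} (-1)^{q(x) + y \cdot x}.
\end{equation*}

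The heart of the argument is the elimination step. If some off-diagonal entry $A_{ij} = 1$, I sum out $x_i$ and $x_j$ first using the elementary identity
\begin{equation*}
\sum_{x_i, x_j \in \FF_2} (-1)^{x_i x_j + u x_i + v x_j} \;=\; 2\,(-1)^{u v},
\end{equation*}
where $u$ and $v$ are the affine functions of the remaining $2m-2$ variables obtained by collecting all terms of $q + y \cdot x$ containing $x_i$ and $x_j$, respectively. The new quadratic contribution $uv$, being the product of two affine forms, expands (via $x_k^2 = x_k$) into an update of $A, b, c$ on the complement of $\{i,j\}$; since $u v^\top + v u^\top$ is a rank-$\leq 2$ symmetric matrix, the bookkeeping costs $O(m^2)$ per elimination. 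This preserves the Gauss-sum form on $2m-2$ variables and multiplies the accumulated prefactor by $2$. Iterating at most $m$ times either exhausts the quadratic part or halts earlier; when the residual polynomial is purely linear, the sum is $\pm 2^{\,\text{free vars}}$ or $0$ in closed form.

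Totaling $O(m)$ eliminations at $O(m^2)$ each gives $O(m^3)$, matching the claim. The main obstacle I expect is not complexity but the bookkeeping invariant: I must carefully check that after each pair elimination the updated $A'$ remains symmetric with zero diagonal, that diagonal contributions from $u v$ are correctly absorbed into $b'$, and that the quadratic Gauss-sum form is preserved so that the recursion is well-defined. Should the direct combinatorics become delicate, a clean fallback is to invoke the Aaronson-Gottesman stabilizer-tableau simulator applied to the Clifford circuit $H$-$\czgate$-$H$, which computes amplitudes of any $N$-qubit Clifford state in $O(N^3)$ and yields the claimed $O(m^3)$ bound directly.
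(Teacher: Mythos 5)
Your proposal is correct, and it is a sound, self-contained proof of the lemma; but it takes a different route from the paper. The paper does not prove the lemma in-text at all: it defers to Lemma~4 of Ref.~\cite{bravyi2019simulation} (whose content is essentially your pair-elimination evaluation of a quadratic Gauss sum over $\FF_2$) and then presents a \emph{further simplified} algorithm that exploits the special bipartite structure of the sliced $HQ_k$ circuit. Because every $\czgate$ surviving the slicing couples a green qubit to a blue qubit, the phase polynomial has the form $x^G\cdot \Gamma x^B + \delta^G\cdot x^G + \delta^B\cdot x^B$ with an $m\times m$ biadjacency matrix $\Gamma$, and the amplitude admits the closed form of Eq.~(\ref{Clifford_amplitude}): it is $\pm 2^{-\mathrm{rank}(\Gamma)}$ times an indicator of $\delta^B_y\in\mathrm{Row}(\Gamma)$ and $\delta^G_y\in\mathrm{Col}(\Gamma)$, computable by a single Gaussian elimination and one linear solve. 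What the paper's route buys is a half-size ($m\times m$ rather than $2m\times 2m$) matrix, an explicit formula amenable to Gray-code incremental updates of $\Gamma$ across the $2^m$ values of $x^R$, and the cheap nullspace-based zero test used to discard $\sim 3/4$ of the terms; what your route buys is generality, since it handles an arbitrary $\czgate/\zgate$ layer with no bipartiteness assumption. Your elimination identity and the $O(m)\times O(m^2)$ accounting check out. One cosmetic fix: over $\FF_2$ a symmetric matrix with zero diagonal gives $x^{\top}Ax=\sum_{i\neq j}A_{ij}x_ix_j=2\sum_{i<j}A_{ij}x_ix_j\equiv 0$, so you should take $A$ strictly upper triangular (or write the quadratic part as $\sum_{i<j}A_{ij}x_ix_j$); with that convention the rank-$\leq 2$ update and the preservation of the form under elimination go through exactly as you describe. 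Your fallback to a tableau-based Clifford amplitude computation is also legitimate, though note that the vanilla Aaronson--Gottesman tableau must be augmented with phase tracking to return amplitudes rather than just outcome probabilities.
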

The proof of this lemma as well as a detailed description of the simulation algorithm can be found in Ref.~\cite{bravyi2019simulation}, see Lemma~4 thereof. 
Our software \cite{Bravyi_Harvard_QuEra_Phase_Polynomial}, as tested, adopts the original implementation of the algorithm developed in Ref.~\cite{bravyi2019simulation}.  Below we describe a simplified algorithm exploiting the special structure of $HQ_k$ circuits.

As noted in \eq{2}, the state generated by the leading Hadamard and the diagonal layers
of the $HQ_k$ circuit has a form
\[
|\psi\ra =  \frac1{2^{n/2}} \sum_{x\in \{0,1\}^n} (-1)^{f(x)} |x\ra,
\]
where  $f\,{:} \, \FF_2^n \to \FF_2$ is a degree-$3$ Boolean polynomial.
Each triple of logical qubits encoded by the code $[[8,3,2]]$
can be naturally labeled by red, green, and blue colors~\cite{bluvstein2023logical}.
We will write $x\,{=}\,(x^R,x^G,x^B)$, where $x^c\,{\in}\, \{0,1\}^m$ is the restriction of $x$
onto the register spanned by all qubits with the color $c\,{\in}\, \{R,G,B\}$.
The polynomial $f(x)$ contains cubic monomials $x^R_i x^G_j x^B_\ell$ originating from $\cczgate$ gates and quadratic monomials $x^R_i x^G_j$, $x^R_i x^B_\ell$, $x^G_j x^B_\ell$ originating from $\czgate$ gates. 
For simplicity, we ignore linear terms in $f$. Recall that adding a degree-1 monomial such as $x^R_i$ to $f$ is equivalent to flipping the bit $R_i$ in the measured bit string. Thus degree-1 monomials in $f$ can be removed by a simple classical processing of the measured bit string. 
Monomials featuring more than one variable of the same color such as $x^R_1 x^G_2 x^G_3$ are prohibited since their generation requires $\cczgate$ or $\czgate$ gates that cannot be implemented transversally for the considered color code.
Assume for simplicity that red, green, and blue qubits span consecutive $m$-qubit blocks.
Consider any output bit string $y\,{\in}\, \{0,1\}^n$. 
Let $y^c\,{\in}\, \{0,1\}^m$ be the restriction of $y$ onto the register $c\,{\in}\, \{R,G,B\}$.
Applying the final Hadamard layer to $\psi$ gives
\[
\la y| HQ_k |0^n\ra =\la y|  H^{\otimes n} |\psi\ra =\frac1{2^{m}} \sum_{x^R \in \{0,1\}^m} \; (-1)^{y^R\cdot x^R} \la y^G y^B |H^{\otimes 2m} |GB(x^R)\ra,
\]
where 
\[
|GB(x^R)\ra = \frac1{2^{m}} \sum_{x^G,x^B\in \{0,1\}^m} \;
(-1)^{f(x^R,x^G,x^B)} |x^G x^B\ra
\]
is a $2m$-qubit state parameterized by $x^R$. Moreover, this state can be created using only Clifford gates
(Hadamards, $\czgate$, and $\zgate$) since $f(x^R,x^G,x^B)$ has degree-2 in the variables $x^G$, $x^B$ for a fixed $x^R$. Thus
one can compute  an amplitude  $\la y^G y^B |H^{\otimes 2m} |GB(x^R)\ra$ efficiently using
off-the-shelf Clifford simulators. Computing the amplitude $\la y| HQ_k |0^n\ra$
 then amounts to carrying out $2^m$
Clifford simulations on $2m$ qubits.
In fact, the Clifford circuit generating $|GB(x^R)\ra$ has a special structure that 
simplifies the simulation task. Namely,  fix a string $x^R\,{\in}\, \{0,1\}^m$.
We can write
\be
\label{f_fixed_xR}
f(x^R,x^G,x^B) = x^G \cdot \Gamma x^B + \delta^G \cdot x^G + \delta^B \cdot x^B 
\ee
for some matrix $\Gamma\,{\in}\, \{0,1\}^{m\times m}$ and some vectors $\delta^G,\delta^B \,{\in}\, \{0,1\}^m$ that depend on $x^R$ (we do not explicitly show this dependence to ease the notations).
Here the dots ($\cdot$) denote the inner product of binary vectors and $\Gamma x^B$ denotes matrix-vector multiplication.  Then
\[
\la y^G y^B |H^{\otimes 2m} |GB(x^R)\ra=
\frac1{2^{2m}} \sum_{x^G,x^B\in \{0,1\}^m}\;
(-1)^{x^G \cdot \Gamma x^B + \delta^G_y \cdot x^G  + \delta^B_y \cdot x^B},
\]
where $\delta^G_y = \delta^G+y^G$ and $\delta^B_y = \delta^B+y^B$.
The above sum can be computed analytically resulting in
\be
\label{Clifford_amplitude}
\la y^G y^B |H^{\otimes 2m} |GB(x^R)\ra=
\left\{
\ba{rcl}
\frac{(-1)^{\delta^B_y \cdot \Gamma^{-1}\delta^G_y}}{2^{\mathrm{rank}(\Gamma)}}
%2^{-\mathrm{rank}(\Gamma)}\cdot(-1)^{\delta^B_s \cdot \Gamma^{-1}\delta^G_s} 
&& \mbox{if } 
\delta^B_y \,{\in}\, \mathrm{Row}(\Gamma)
\mbox{ and }
\delta^G_y \,{\in}\, \mathrm{Col}(\Gamma) 
 \\
0 && \mbox{else} \\
\ea
\right..
\ee
Here we write $\mathrm{Col}(\Gamma)$ and $\mathrm{Row}(\Gamma)$ for the linear subspace of $\{0,1\}^m$ spanned by the columns and rows of $\Gamma$ respectively.
Finally, by a slight abuse of notations, we write  $\Gamma^{-1}\delta^G_y$ for any solution $x^B\,{\in}\, \{0,1\}^m$ of the linear system
\be
\label{linear_system}
\Gamma x^B = \delta^G_y
\ee
(we do not assume that $\Gamma$ is invertible).  This system is feasible whenever $\delta^G_y\,{\in}\, \mathrm{Col}(\Gamma)$, which is the only case when Eq.~(\ref{Clifford_amplitude})
requires solving the system.
The inner product $\delta^B_y \cdot \Gamma^{-1} \delta^G_y=\delta^B_y\cdot x^B$ in Eq.~(\ref{Clifford_amplitude}) is the same for all solutions $x^B$ of the linear system Eq.~(\ref{linear_system}) due to the condition $\delta^B_y \,{\in}\, \mathrm{Row}(\Gamma)$.
Indeed, different solutions $x^B$ of the linear system Eq.~(\ref{linear_system}) differ by a vector from the nullspace of $\Gamma$.
Such vector is orthogonal to any row of $\Gamma$. Thus Eq.~(\ref{Clifford_amplitude}) is well defined, even if $\Gamma$ is not an invertible matrix.

To conclude, 
 computing amplitudes Eq.~(\ref{Clifford_amplitude}) requires only the standard linear algebra
over the binary field: computing the rank of a matrix and solving linear systems.
The time complexity scales as $O(m^3)$. The overall computation of
the amplitude $\la y| HQ_k |0^n\ra$  takes time $O(m^32^m)$.

\section{Details of the implementation}

Our C/C++ implementation is available publicly \cite{Bravyi_Harvard_QuEra_Phase_Polynomial}.  It is designed to simulate $HQ_k$ circuits with the number of qubits $n$ up to 96 ($k{=}5$).  Due to the demonstrated practical efficiency of the simulation for $n\,{\leq}\,96$, we believe the demonstration of the simulation for the next larger $n$, $n{=}192$, will not be necessary since $HQ_k$ circuits appear to be insufficiently difficult for establishing a quantum advantage and better alternatives exist.

When $n\,{\leq}\,96$, \lem{cliff} addresses circuits with up to $64$ qubits, which, given our implementation \cite{Bravyi_Harvard_QuEra_Phase_Polynomial}, conveniently coincides with the number of bits in a single machine word in modern computers.  Thus, a layer of single-qubit $\zgate$ operations can be described by one machine word.  We store the $\czgate$ gates in a $2m {\times} 2m$ Boolean array, which is slightly (by a factor of 4; this is because the inputs of the commuting self-inverse $\czgate$ gates come one from each of the sets $x^B$ and $x^G$ of cardinality $m$) wasteful.  However, we ensure that the computation stays in the processor caches, and thus we see the reduction of memory as unnecessary.  

We cycle through the bit patterns $x^R$ in Gray code order.  At each step, the matrix $\Gamma$ and vectors $\delta^G$, $\delta^B$ parameterizing the polynomial $f(x^R,x^G,x^B)$ according to  Eq.~(\ref{f_fixed_xR}) are updated. This update is facilitated by the fact that $\Gamma$ is a linear function of $x^R$. Thus updating the matrix $\Gamma$ upon flipping a single bit of $x^R$ requires a single addition of $2m\,{\times}\, 2m$ matrices. As noted above, this amounts to $2m$ bitwise operations since we represent each row of $\Gamma$ by a single $64$-bit integer~\footnote{The tested C++ implementation treats blue and green qubits on the same footing. Accordingly, the $\czgate$ circuit generating the state $|GB(x^R)\ra$ is represented by a binary matrix of size $2m\,{\times}\, 2m$. In contrast, the Python version of our simulator solves a linear system with the compact $m\,{\times}\, m$ matrix}.
Before executing a Clifford simulation we perform a straightforward verification ensuring that the contribution to the desired amplitude is non-zero. Namely, using the symmetries of the circuit $HQ_k$ one can show that the matrix $\Gamma$ is symmetric and the vector $x^R$ always belongs to the nullspace of $\Gamma$.  Accordingly, the row space and the column space of $\Gamma$ are the same.
The Clifford amplitude computed in Eq.(\ref{Clifford_amplitude}) is zero whenever $\delta^G_y \,{\cdot}\, x^R\,{=}\,1$ or $\delta^B_y \,{\cdot}\, x^R\,{=}\,1$.
Indeed, since $x^R$ belongs to the nullspace of $\Gamma$, it is orthogonal to any vector in the row space or the column space of $\Gamma$.
This simple check discards roughly $3/4$ of the Boolean patterns $x^R$ for which 
$\delta^G_y \,{\cdot}\, x^R\,{=}\,1$ or $\delta^B_y \,{\cdot}\, x^R\,{=}\,1$.
% When the $x^R$ patterns are tried in Gray code order, this keeps the Hamming distance between the neighboring patterns to 1, allowing quick updates of the largest part of the data structure, the $2m {\times} 2m$ Boolean matrix storing $\czgate$ gates. Specifically, flipping one qubit value changes up to $O(k^2)=O(\log^2(m))$ rectangles, depending on whether its role in the $HQ$ circuit is a control or a target.  

\section{Runtime and resource estimate}\label{sec:runtime}

\begin{figure}[t]
    \begin{tikzpicture}
        \begin{axis}[
            xlabel=Additional $\cnotgate$ layers,
            ylabel=Runtime per bitstring (sec),
            ymode=log,
            ymin=0.00001,
            every axis/.append style={font=\small},
            ]
        \addplot[color=teal,mark=*,mark size=3] coordinates {
            (0,2)
            (1,9)
            (2,200)
            (3,400000)
        };
        \addplot[color=blue,mark=*,mark size=3,mark options={solid}] coordinates {
            (0,0.00257947)
            (1,0.00289219)
            (2,0.00332019)
            (3,0.00358142)
            (4,0.00401678)
            (5,0.00428741)
        };
        \end{axis}
    \end{tikzpicture}
    \caption{Comparison of simulation runtimes with the number of additional layers added between \cite[Figure 5d bottom]{bluvstein2023logical} (teal, redrawn), and our simulation (blue).}
    \label{fig:fig5dcomparison}
\end{figure}
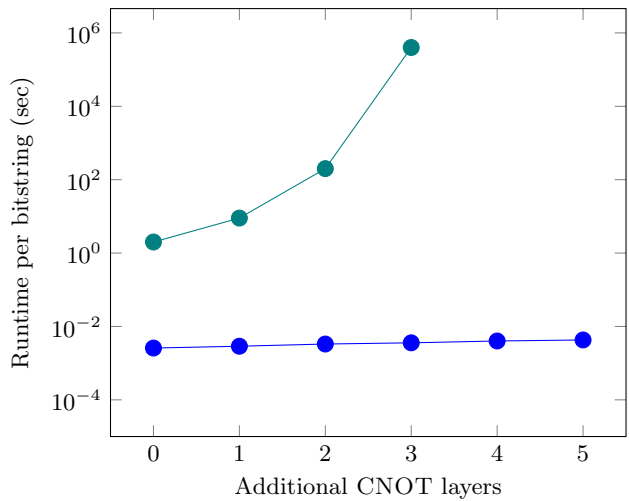

We ran 48- and 96-qubit simulations on a c2d-highcpu-112 instance \cite{c2d-highcpu-11} from the Google Cloud platform.  We recorded the simulation runtime for determining the amplitude of a random bitstring output $\ket{y}$ for a given number of trials.  The results are shown in \tab{performance-results-average}.  The reported times are average; we observed the worst times of $0.00451315$ and $8.16629$ seconds for the 48- and 96-qubit simulations, correspondingly.  The running time of $\fortyeightqubittime$ seconds during strong classical simulation beats all figures of merit that the physical quantum computer possesses, including the unitary execution time of ${\sim}0.01$ seconds, unitary execution with state preparation and measurement time of ${\sim}0.3$ seconds, and roughly 3 days including postselection and error correction \cite{bluvstein}.  Classical simulation cost considered on the per hour average price basis of $\$4.5767$ \cite{c2d-highcpu-11}, amounts to $\$0.00000327929$ (US Dollar) per amplitude in a 48-qubit computation and $\$0.00529662$ per amplitude in a 96-qubit computation, and may likely be difficult to meet by the quantum computer utilized \cite{bluvstein2023logical} due to power consumption alone.

We also ran 48- and 96-qubit simulations with up to 5 additional randomly oriented and asymmetric $\cnotgate$ and A/B diagonal block layers, and observed only a marginal difference in the simulation runtime.  For the 48-qubit case, we make an explicit comparison to \cite[Figure 5d bottom]{bluvstein2023logical}, see \fig{fig5dcomparison}.  Our simulation time does not grow in a meaningful way with the introduction of new circuit layers, suggesting that adding more layers to the $HQ$ circuit is unlikely to lead to a quantum advantage. 

\begin{table}[htbp]
    \centering
    \begin{tabular}{|c|c|c|}
        \hline
        \hspace{0.2cm} Number of qubits \hspace{0.2cm} & \hspace{0.2cm} Runtime in seconds \hspace{0.2cm} & \hspace{0.2cm} Number of trials \hspace{0.2cm} \\
        \hline
        48 & \fortyeightqubittime & 10000 \\
        96 & \ninetysixqubittime & 1000 \\
        \hspace{0.2cm} 96, with five extra $\cnotgate$ layers \hspace{0.2cm} & 4.18656 & 1000 \\
        \hline
    \end{tabular}
    \caption{Average-case runtime performance on a Google Cloud c2d-highcpu-112 instance for determining the amplitude of a random output string.}
    \label{tab:performance-results-average}
\end{table}

Simulating the 192-qubit case requires more processor cores than the 48- and 96-qubit cases, but is within striking distance of the amount of computational resources used for training deep neural networks. Preliminary analysis suggests that a \href{https://cloud.google.com/tpu/docs/system-architecture-tpu-vm}{TPU slice} from the Google Cloud platform could simulate a 192-qubit instance in a few days. The core workload is to iterate over all $x^R$ bit patterns and simulate the ${\sim}25\%$ Clifford circuits with nonzero amplitude contributions. This comes out to a total workload of 18,446,744,073,709,551,616 bitstrings to test, and 4,611,686,018,427,387,904 Clifford circuits to simulate.  The Clifford circuits are simulated with a linear system on a $128{\times}128$ dense binary matrix, the ideal workload for a TPU cluster.  We propose an implementation where CPUs test bitstrings for non-null amplitudes and upload a matrix of the non-null bitstrings for TPUs to simulate in parallel. 
Assuming a 6-instruction implementation of the bitstring test on a CPU, 75\% branch prediction success, a 10-instruction pipeline, all memory in cache or in registers (representing an amortized 5 clock cycles per instruction), and 2 fully occupied hyperthreads per core; a fleet of 1000 62-core Intel Xeon processors with 3GHz clocks can test 18,446,744,073,709,551,616 bitstrings in ${\sim}40$ hours. With access to 3,072 TPUv4 chips, one TPU slice can solve 12,288 Clifford circuits every ${\sim}1ns$ TPU clock cycle, as each TPUv4 chip has access to four $128{\times}128$ binary matrix logic cores \cite{jouppi2023tpu}.  This means our TPU slice can simulate all Clifford circuits in ${\sim}104$ hours.  Since the TPUs and CPUs operate in parallel, and the Clifford simulation test on TPUs is the bottleneck, the total time for our cluster of 1000 Google Cloud VMs and 4096 TPUs is on the order of magnitude of 104 hours, or approximately 4 days---within the range of expected cluster size, expense, and time expected of training a deep neural network.

We note that the above resource estimate for a $192$-qubit simulation reaches the number of qubits close to $208$, considered sufficient for quantum supremacy \cite{dalzell2020many} by IQP circuits.  This serves as evidence that the $HQ$ circuit is not as complex as an arbitrary/random IQP circuit.

\begin{comment}
\textcolor{green}{Felix: we can also include the upper bounds of simulation cost, but the worst case is so rare that I don't think it's worth it. Here's the numbers in case you think it is worth it:}\newline\newline
\textcolor{green}{The upper bound of the simulation times for each qubit differs by a multiplicative factor of 2x. While most statevectors will only have 25\% non-null contribution from each bit pattern, a small fraction will have up to 50\% non-null contributions.}
\begin{table}[htbp]
    \centering
    \begin{tabular}{|c|c|c|}
        \hline
        \hspace{0.2cm} \textcolor{green}{Simulation Type} \hspace{0.2cm} & \hspace{0.2cm} \textcolor{green}{Runtime in seconds} \hspace{0.2cm} & \hspace{0.2cm} \textcolor{green}{Number of Trials} \hspace{0.2cm} \\
        \hline
        \textcolor{green}{48 Qubits} & \textcolor{green}{0.00451315} & \textcolor{green}{10000} \\
        \textcolor{green}{96 Qubits} & \textcolor{green}{8.16629} & \textcolor{green}{1000} \\
        \hline
    \end{tabular}
    \caption{\textcolor{green}{Worst-case Runtime Performance on a Google Cloud c2d-highcpu-112 instance.}}
    \label{tab:performance-results-worst-case}
\end{table}
\end{comment}

\section{Discussion}

The $HQ_k$ circuit \cite{bluvstein2023logical} can be modified in various ways that may increase the difficulty of its classical simulation.  First, we discuss directions in which a further modification is unlikely to be fruitful as a means of increasing the complexity of classical simulation and is thus unlikely to lead to a demonstration of quantum advantage and next a direction that could be suitable. 
\begin{enumerate}
    \item Adding or removing $\zgate$ or $\czgate$ gates (\cite[Figure 5c]{bluvstein2023logical}) does not affect the simulation complexity in a meaningful way. This is because $\zgate$ gates can be accounted for with little complexity, and $\czgate$ gates are a part of the -H-CZ-H- Clifford circuit being simulated.
    \item Due to the use of the [[8,3,2]] error correcting code, each IQP circuit of the flavor considered in \cite{bluvstein2023logical} would have to apply $\czgate$ and $\cczgate$ gates to the triples of qubits in a qubit block, where no two blocks overlap.  This means that the covering set will contain $n/3$ qubits, thus allowing a simulation at the cost $O(\text{poly}(n)2^{n/3})$, offering an almost cubic improvement over straightforward simulations with complexity $O(\text{poly}(n)2^{n})$, such as the state vector simulation. 
    \item As demonstrated in \sec{runtime}, increasing the computation depth by adding layers of $\cnotgate$ and diagonal gates does not make the simulation meaningfully more complex, since the additional layers are subject to the conditions studied in \lem{1} and \lem{cliff}.  Thus, the complexity of the algorithm remains the same.
    \item In \sec{runtime}, we showed that doubling (to $96$) or quadrupling (to $192$) the number of qubits in the original Harvard/QuEra experiment does not disable the ability to simulate these kinds of computations using classical hardware.  This argument appears to first break when the number of logical qubits reaches $384$ or higher, which is far in excess of around $70$ to $105$ qubits marking a point where a quantum computation is known that may be intractable to reproduce by the classical means (\cite{nam2019low} Heisenberg Hamiltonian and \cite{babbush2018encoding} Hubbard model).  Given the lower bound on the number of qubits to imply classical intractability is $54$ \cite{pednault2019leveraging}, a more fair comparison can be between $16{=}70{-}54$ (or $50{=}105{-}54$) and $330=384{-}54$ additional qubits on top of absolute minimum, and such a difference is significant.  This comparison implies that a more fruitful direction to demonstrating quantum advantage can be the development of a complete logical library and full-fledged fault tolerance rather than increasing the number of qubits. 
\end{enumerate} 

We note the ability to implement a ``permutation $\cnotgate$'' \cite{bluvstein2023logical, wang2023fault} performing the transformation $\ket{a,b} \mapsto \ket{a,a \oplus b}$, equivalent to that offered by the $\cnotgate(a,b)$ gate, to a pair of qubits in a single logical block of the [[8,3,2]] code. We suspect that the experiment does not utilize such ``permutation $\cnotgate$s'' because this would require selective control of individual physical qubits rather than just selective control of logical blocks. Implementing a ``permutation $\cnotgate$'' fault-tolerantly appears to require commuting the corresponding physical $\swapgate$ gate towards the end of the circuit. This shuffles controls and targets in layers of transversal $\cnotgate$s and appears to require instructions outside the set that ``can be delivered in parallel with only a few control lines'' \cite{bluvstein2023logical}.  However, we believe this to be a potentially important gate to add to the instruction set, as, with sufficient effort, it allows to enlarge the minimal covering set considered in this work to contain all $n$ qubits, thus eliminating the cubic reduction in classical simulation time offered by our work.  Moreover, an arbitrary degree-3 phase polynomial can be implemented once such a ``permutation $\cnotgate$'' gate is added to the instruction set.  Specifically, suppose we want to implement $\cczgate(x,y,z)$, where all qubits belong to different blocks. WLOG, assume that the qubit $z\,{=}\,z^B$ belongs to the blue set in its block of three, $(z^R,z^G,z^B)$. Use ``permutation $\cnotgate$'' to move $x$ to red, and $y$ to green positions within their blocks, and implement four $Z$-angle products, $\cczgate(x\, {\oplus} \,z^R, y\, {\oplus}\, z^G, z^B)$, $\cczgate(x \,{\oplus} \,z^R, z^G, z^B)$, $\cczgate(z^R, y\, {\oplus}\, z^G, z^B)$, and $\cczgate(z^R, z^G, z^B)$, which, combined, compute the desired $\cczgate(x, y, z^B) = \cczgate(x, y, z)$. This is enabled by the cross-block addressable $\cnotgate$ gate constructed by combining transversal and ``permutation $\cnotgate$'' gates as follows: 
\begin{equation*}
	\Qcircuit @C=0.5em @R=0.5em @!{
		& \ctrl{4}  & \qw \\ 
		& \qw       & \qw \\ 
		& \qw       & \qw \\ \\
		& \targ     & \qw \\
		& \qw       & \qw \\ 
		& \qw       & \qw }
	\hspace{1em}\raisebox{-4em}{=}\hspace{1em}
	\Qcircuit @C=0.2em @R=0.5em @!{
		& \ctrl{4}  & \qw       & \qw  
		& \qw & \qw 
		& \ctrl{4}  & \qw       & \qw
		& \qw & \qw 
		& \ctrl{4}  & \qw       & \qw 
		& \qw & \qw & \qw \\ 
		& \qw       & \ctrl{4}  & \qw
		& \targ & \ctrl{1} 
		& \qw       & \ctrl{4}  & \qw   
		& \targ & \ctrl{1} 
		& \qw       & \ctrl{4}  & \qw      
		& \targ & \ctrl{1} & \qw \\ 
		& \qw       & \qw       & \ctrl{4}  
		& \ctrl{-1} & \targ 
		& \qw       & \qw       & \ctrl{4} 
		& \ctrl{-1} & \targ 
		& \qw       & \qw       & \ctrl{4} 
		& \ctrl{-1} & \targ & \qw \\ \\
		& \targ     & \qw       & \qw   
		& \qw & \qw 
		& \targ     & \qw       & \qw
		& \qw & \qw 
		& \targ     & \qw       & \qw 
		& \qw & \qw & \qw \\
		& \qw       & \targ     & \qw    
		& \qw & \qw 
		& \qw       & \targ     & \qw
		& \qw & \qw 
		& \qw       & \targ     & \qw 
		& \qw & \qw & \qw \\ 
		& \qw       & \qw       & \targ 
		& \qw & \qw 
		& \qw       & \qw       & \targ
		& \qw & \qw 
		& \qw       & \qw       & \targ 
		& \qw & \qw & \qw }
\end{equation*}
Note that our focus in the above discussion was on the demonstration of the ability to implement arbitrary degree-3 phase polynomials but not the efficiency of such an implementation.  We highlight that the complexity of the problem of simulating IQP circuits computable by arbitrary degree-3 polynomials is \#P-hard \cite[Theorem 1]{ehrenfeucht1990computational}, \cite[Theorem 6]{dalzell2020many}.

Our simulation is symbolic and thus does not suffer from the loss of precision in dealing with the floating point numbers.  We also found a certain swapping symmetry where pairs of blocks of qubits serving as controls/targets in the $HQ$ circuit can all be simultaneously interchanged allowing to extend the computed amplitude value for an outcome $\ket{y}$ to up to $n/6$ other outcomes $\ket{y^\prime}$ obtained from the $\ket{y}$ by permuting its bits.  We did not explore this symmetry further.

% \section*{Acknowledgement}

\end{document}